\newcommand{\Nn}{{\mathbb{N}}}
\newcommand{\Gg}{{\mathcal{G}}}
\newcommand{\lb}{\left(}
\newcommand{\rb}{\right)}
\newcommand{\Mm}{\mathcal{M}}
\newcommand{\Bb}{\mathcal{B}} 
 \newtheorem{lemma}{Lemma}
\newtheorem{theorem}{Theorem}
\theoremstyle{definition}
\theoremstyle{remark}
\numberwithin{equation}{section}
\begin{document}

\title{Robust reputation-based ranking on multipartite rating networks}

%    Information for first author
\author{Jo\~{a}o Sa\'{u}de$^\ast$}
%    Address of record for the research reported here
\address{Department of Electrical and Computer Engineering, Carnegie Mellon University, Pittsburgh, PA 15213}
%    Current address
\curraddr{LARSyS, Instituto Superior T\'ecnico,
University of Lisbon, Lisbon, Portugal}
\email{jsaude@andrew.cmu.edu}
%    \thanks will become a 1st page footnote.
\thanks{$^\ast$ The first two authors contributed equally to this work.\\
The work was partially supported through the Carnegie Mellon/Portugal Program managed by ICTI from FCT and by FCT grant SFRH/BD/52162/2013.}

%    Information for second author
\author{Guilherme Ramos$^\ast$}
\address{Department of Mathematics, Instituto Superior T\'ecnico,
University of Lisbon, Lisbon, Portugal}
\curraddr{Instituto de Telecomunica\c{c}\~oes, Instituto Superior T\'ecnico,
University of Lisbon, Lisbon, Portugal}
\email{guilherme.ramos@tecnico.ulisboa.pt}
\thanks{This work was developed under the scope of R\&D Unit 50008, financed by the applicable financial framework (FCT/MEC through national funds and when applicable co-funded by FEDER - PT2020 partnership agreement).
The second author acknowledges the support of the DP-PMI and Funda\c{c}\~ao para a Ci\^encia e a Tecnologia (Portugal), namely through scholarship SFRH/BD/52242/2013.}

\author{Carlos Caleiro}
\address{Department of Mathematics, Instituto Superior T\'ecnico,
University of Lisbon, Lisbon, Portugal}
\email{guilherme.ramos@tecnico.ulisboa.pt}
\thanks{}

\author{Soummya Kar}
\address{Department of Electrical and Computer Engineering, Carnegie Mellon University, Pittsburgh, PA 15213}
\email{soummyak@andrew.cmu.edu}
\thanks{}

%    General info
%\subjclass[2000]{Primary 54C40, 14E20; Secondary 46E25, 20C20}

\date{February 19, 2017 and, in revised form, April 27, 2017.}

%\dedicatory{This paper is dedicated to our advisors.}

\keywords{Trust and Reputation, Learning Preferences or Rankings, Data Mining and Classification}

\begin{abstract}
	The spread of online reviews, ratings and opinions and its growing influence on people's behavior and decisions boosted the interest to extract meaningful information from this data deluge.
Hence, crowdsourced ratings of products and services gained a critical role in business, governments, and others.
We propose a new reputation-based ranking system utilizing multipartite rating subnetworks, that clusters users by their similarities, using Kolmogorov complexity.
Our system is novel in that it reflects a diversity of opinions/preferences by assigning possibly distinct rankings, for the same item, for different groups of users.
We prove the convergence and efficiency of the system and show that it copes better with spamming/spurious users, and it is more robust to attacks than state-of-the-art approaches.
%	To allow diversity of opinions/preferences our system allows for different rankings, for the same item, for different groups of users.
%	This allows to explore a multitude of tastes and use this information to generate more meaningful rankings of items, which by turn are displayed to users accordingly to the cluster they belong.
%	Rating systems allow users to express their opinions about objects and services, but to make useful this information needs to be summarized. 
%	Ranking systems play a central role in our quotidian and spread across the Internet, as a mean of advertising and as a way to provide detailed informations from users to the service providers.
\end{abstract}

\maketitle

\section{Introduction} % (fold)
\label{sec:introduction} 
	Nowadays electronic commerce, streaming media and the collaborative economy (such as car rides and accommodation) are ubiquitous in our daily life.
	People's opinion can be as effective as an advertisement.
	% In the past, the spread of word was the best way to advertise a product/service, and now the same happens by different means.
	This promoted the development of crowdsourced ratings and reviews.
	Consumers started to use and rely on this information to decide whether or not to buy a product/service, have a meal on a certain restaurant, or simply attend an event, see~\cite{sparks2011impact,forman2008examining}.
	Aware of how ratings of products/services can impact sales, \cite{chevalier2006effect}, the sellers, in turn, rely on the ratings and reviews of their products to assess their commercial viability as well as to predict sales, \cite{dellarocas2007exploring}.
	Further, they use this information not only to improve their products, but also to target advertisement campaigns.
	Online ratings and reviews are perceived so important that it is desirable to detect and automatically correct rating manipulations through fake users' ratings.  
	
	\textbf{Previous work.} A simple way to collect and process the huge amount of ratings is using the \emph{arithmetic average} (AA).
	Some of the drawbacks of the AA are the indistinguishability of users, so it treats the most relevant raters and spam in the same way. 
	Therefore AA is prone to manipulation of ratings through malicious attacks or spamming.
	Further, AA might be misleading, because it does not capture the possible multimodal behavior of ratings, see~\cite{hu2006can}.
	For instance, in a bimodal distribution of ratings on the opposite extremes, the average would be located in the middle where the density of votes of users is low.

	Using weighted average algorithms allows us to attribute different importance to certain users. %, and filter some noise.
	This was explored in previous works \cite{yu2006decoding,de2010iterative}.
	The authors in \cite{li2012robust} used a modification of the weighted average.
	%** ranking systems use a reputation based systems, where more relevant users weight more on the final ranking of an item.
	In~\cite{mizzaro2003quality}, the author uses an additional time-dependent quantity to weigh the ratings of users.
	These methods are more robust to spamming and attacks than the AA.

	%It also incorporates the reputation of users based on aggregate differences between them.
	The methods above have a bipartite graph structure. 
	There are two types of nodes, users and items, and weighted edges (ratings) linking the two, see Figure~\ref{fig:grafo}, when not considering the dashed lines.	
	%In the bipartite rating systems we compute the reputations of users based on the divergence of their ratings and a weighted sum of rankings of the products.
	This kind of algorithm does not take into account possible relations between users or users' preferences.
	Furthermore, these approaches do not incorporate the (possible) multimodal behavior of items' ratings.
	Hence it forces all users to subjugate to the average, that in turn hinders the rise of a multitude of preferences/tastes/tendencies.
	\begin{figure}
		\centering
 		\includegraphics[width=0.55\textwidth]{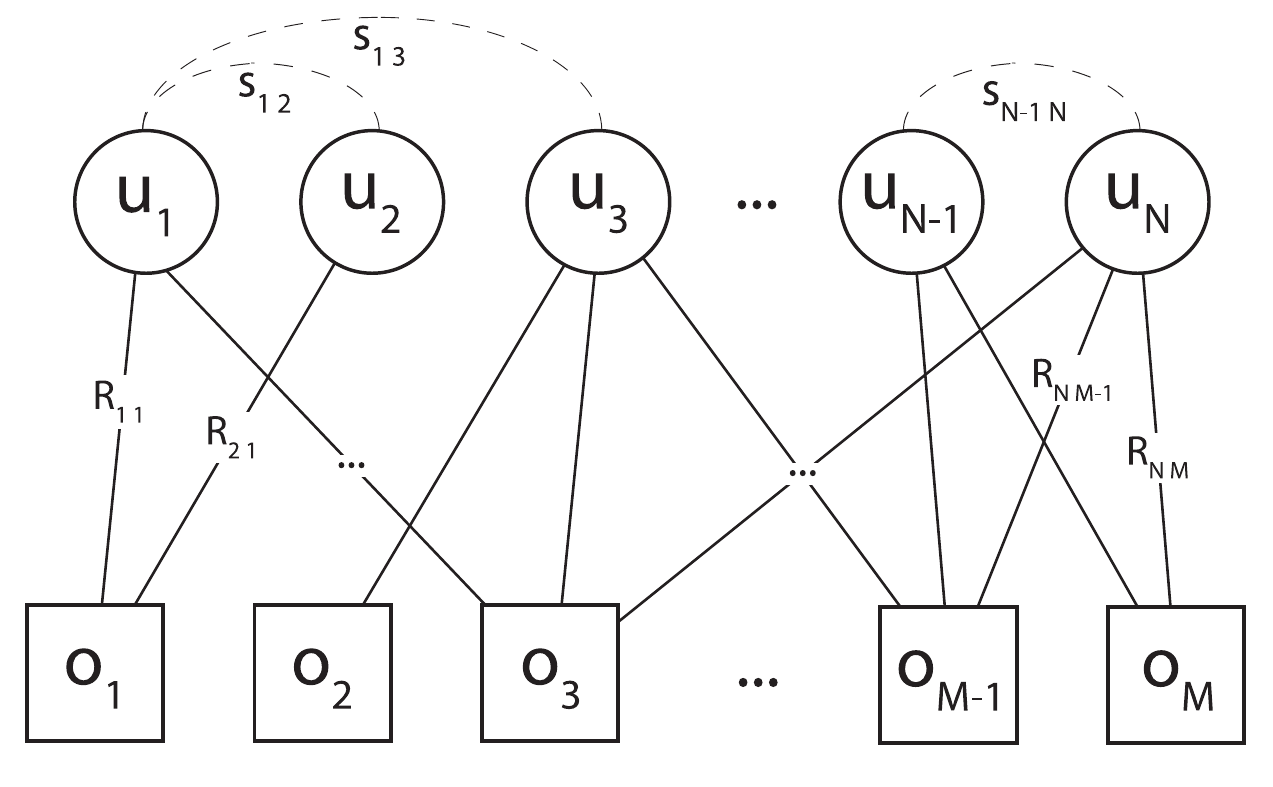}
	 	\caption{Graph representing $N$ users, $u_i$, $M$ items, $o_j$, and the ratings given by user $i$ to item $j$, $R_{ij}$. 
				 The lines represent the connection, through ratings, from the users to the items.
				 The dashed lines represent the links between users, through their similarities, $s_{mn}$.}
	 	\label{fig:grafo}
	\end{figure}
	 
	In prior work~\cite{symeonidis2011product}, the authors extended the bipartite graph approaches.
	They used implicit social networks, known as online Social Rating Networks (SRN) (that emerge from different users commenting in a similar way on a given set of products) and explicit social networks (built by users themselves, through friendship or working relation) to predict ratings and recommend products. 
	The subnetworks between users are represented in dashed lines of Figure~\ref{fig:grafo}.
	% represent implicit relations between users. 	
	
	%By exploring the former ideas, we propose an iterative reputation-based algorithm on implicit-generated rating subnetworks.
	\textbf{Our contribution.} % refer that we prove convergence and efficiency for a larger class of procedures**
	Exploring previous ideas, we propose a class of iterative reputation-based ranking system on multipartite graphs. % implicit-generated rating subnetworks.
	%Using similarity measures, our algorithm finds subnetworks within the main pool of users solely from their rating similarities, and then computes different items' rankings for each different subnetwork.
	We first prove the convergence and efficiency for a generic class of reputation-based ranking systems on bipartite graphs, that extends the results of previous works.
	After, we use similarity measures to design a system that clusters users solely based on their rating similarities, and then computes (possible) different rankings for same items on different subnetworks.
	This enables us to present custom-built rankings of items to each cluster (community). 	
	Our approach not only adapts better to the preferences of similar users, but also improves robustness against spurious users or spamming/malicious attacks.
	Further, it embeds the multimodal behavior of ratings' distribution.
	This contrasts with the algorithms that we overviewed, because those neglect the smaller sub-groups that do not identify with the majority. %(and that can be seen as outliers).
	We propose two novel similarity measures, the linear distance (LD) and the Kolmogorov distance (KD), and we test the normalized compression distance (CD) proposed in \cite{li2004similarity}. 
	Not only our approaches perform better, but also they have smaller computational complexity than CD.  
	A distinguishable feature of the LD comparing to KD is that the former responds better to noisy spam whereas the latter is slightly more robust to targeted attacks to a set of items.
	Both carry the same order of computational complexity, although KD is faster, in our implementation, than LD.
	Finally, using LD we obtain better robustness results than state-of-the-art approaches.
	%We compute this using similarity measures between users. 
	
	% The remaining of the paper is organized as follows.
	\textbf{Paper structure.} The paper is organized as follows. 
	In Section~\ref{sec:reputation_based_ranking_algorithms}, we describe and prove the convergence and efficiency of a class of reputation-based ranking iterative algorithms.
	We propose a new reputation-based ranking system in Section~\ref{sec:implementation}, where we also prove its convergence.  
	Our new algorithm is in Section~\ref{sub:bi_partite_graph_algorithms}, and its implementation is explained in Section~\ref{sec:implementation}.
	The experimental setup is described in Section~\ref{sec:experimental_setup} and we discuss our results in Section~\ref{sec:experimental_results}.
	We end the paper with final conclusions in Section~\ref{sec:conclusions}.	
% section introduction (end)
% \section{Related work} % (fold)
% \label{sec:related_work}
% 	** small survey previous works **
% % section related_work (end)
% \section{Main results} % (fold)
% \label{sec:main_results}
% 	* Identify the trend setters within network groups (more reputation)
%
% 	* The number of sub-networks must depend on some criteria and show maximize/minimize some criteria
%
% 	* A networks that is formed by the users induce another network on the products. Compute this and extract info from that.
% % section main_results (end)
\section{Reputation-based ranking algorithms} % (fold) 
\label{sec:reputation_based_ranking_algorithms}
	A reputation-based ranking system assigns a reputation to each user and then utilizes it to weigh their individual ratings on products in order to compute the products' rankings.
	Let $U$ be a set of users, $O$ a set of items, $R_\bot,R_\top$ the minimum and maximum ratings, respectively, with $\mathcal R =[R_\bot,R_\top]\cap \mathbb Z^+$ the set of strictly positive integers, the allowed ratings, $\Delta_\mathcal R=R_\top-R_\bot$, and $R\subseteq U\times O\times \mathcal R$ be the set of ratings given by users to items.
	For instances, if user $i$ rates item $j$ with rating $R_{ij}$, then we write it as $(i,j,R_{ij})\in U\times O\times\mathcal R$, or simply $R_{ij}\in\mathcal R$.
	% and denote the rating that user $i$ gave to item $j$ as $R_{ij}\in\mathcal R$, whenever user $i$ rated item $j$.
	We denote the \emph{set of items rated by user} $i$ as $O_i=\{o_j|\exists {R_{ij}}\in\mathcal R$~s.t.~$(u_i,o_j,R_{ij})\in R\}$, the \emph{set of users that rated item} $j$ as $I_j = \{u_i|\exists {R_{ij}}\in\mathcal R$~s.t.~$(u_i,o_j,R_{ij})\in R\}$.
	From now on, we consider normalized ratings, $0<R_{ij}\leq 1$, therefore the rankings and reputations take values in $]0,1]$.
	
	We model the ranking system on a weighted graph $\Gg=(U\cup O,R)$. % ou bipartite graph $\mathcal B=(U,O,R)$}
	We first consider a bipartite graph, $\Bb$, with the two sets of vertices given by users and items, see the subgraph in Figure~\ref{fig:grafo} when only considering edges between users and items.
	Subsequently, we consider an extra set of edges between the users' vertices, connecting users that are similar, see Figure~\ref{fig:grafo}.
	%
	% We first consider a bipartite graph, $\Bb$, where we encapsulate the network within users on one side, and the items on the other side, see Figure \ref{fig:grafo}.
	% Then we consider the subgraph, among the users in the graph, induced by the ratings give to common items.
	This gives rise to a multipartite graph, $\Mm$. 
	A multipartite graph is a graph such that that to color vertices with a common edge using different colors with need more than two colors, see~\cite{bollobas2013modern}. 
	Here, we need one color for the items and at least two more whenever there is a cluster with more than one user.
%	Because, in order to color vertices with a common edge, using different colors, we need one color for the items and at least two more whenever there is a cluster with more than one user, see~\cite{bollobas2013modern}.	
	This can either model users' networks generated by users themselves, like social networks see \cite{symeonidis2011product}, or, as we propose, automatically generated by the ranking systems based on ratings of items given by users.
	%We don't address the former and develop the latter in Section \ref{sub:implicit_user_based_subnetworks}.
	A partition on subnetworks may also be done on the items' side, although we do not explore this possibility.
	\subsubsection{Bipartite graph algorithms} % (fold)
	\label{sub:bi_partite_graph_algorithms}
	Here, we generalize the iterative reputation-based ranking methods, discussed above, that take the form of:
	\begin{equation} \label{eq:rbra}
		\begin{cases}
			 r^{k+1} &= g_R(c^k) \\ % \frac{1}{|I_j|} \sum_{u_i\in I_j}R_{ij}c_i^k\\
			 c^{k+1} &= h_R(r^{k+1})%, \quad c_i^0 \in ]0,1], %1-\frac{f(\lambda,O_i)}{|O_i|} \sum_{o_j\in O_i}|R_{ij}-r_{j}^{k+1}|^p.
		\end{cases},
	\end{equation}
	where $k$ denotes the iteration index and $c^{0}$ the vector of initial reputation of users %Coupled with the initial vector of users' reputation $c^0$, 
	with $c_i^0\in ]0,1]$.
	Here, $r=(r_1,\hdots,r_{|O|})$ with $r_j$ denoting the ranking of item $j$, computed with the function $g_R:[0,1]^{|U|}\to [0,1]^{|O|}$, with the set of ratings, $R$, as a parameter.
	The users' reputations, denoted as $c=(c_1,\hdots,c_{|U|})$, where $c_i$ is the reputation of user $i$, are determined by the function $h_R:[0,1]^{|O|}\to [0,1]^{|U|}$.	
	
	Next, we present more general results that subsume all the proofs of convergence and efficiency in~\cite{li2012robust}. 
	This allows to design a wider range of convergent and efficient reputation-based ranking systems.
	Consider a Banach space, $\mathcal X$, with an induced distance $d:\mathcal X\times \mathcal X \to [0,1]$. 
	Using the Lipschitz condition~\cite{kreyszig1989introductory}, we prove the following results.
	\begin{lemma}
		Consider the iterative scheme in~\eqref{eq:rbra}. 
		Let $g_R$ and $h_R$ be $\eta_g$ and $\eta_h$-Lipschitz maps, respectively. 
		It follows that $g_R \circ h_R$ is an $\eta$-Lipschitz map, with $\eta = \eta_g \eta_h$.
		If $\eta<1$, then~\eqref{eq:rbra} is a contraction.
	\end{lemma}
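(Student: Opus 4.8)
The plan is to prove the two assertions in turn, the first being a direct chaining of the two Lipschitz estimates and the second an observation about which map actually drives the recursion. First I would check that the composition $g_R\circ h_R$ is well defined as a self-map: since $h_R\colon[0,1]^{|O|}\to[0,1]^{|U|}$ and $g_R\colon[0,1]^{|U|}\to[0,1]^{|O|}$, the output of $h_R$ lies in the domain of $g_R$, so $g_R\circ h_R\colon[0,1]^{|O|}\to[0,1]^{|O|}$ maps the space to itself. For an arbitrary pair $r,\tilde r$ I would then write $d\big((g_R\circ h_R)(r),(g_R\circ h_R)(\tilde r)\big)=d\big(g_R(h_R(r)),g_R(h_R(\tilde r))\big)$, apply the $\eta_g$-Lipschitz property of $g_R$ to bound this by $\eta_g\, d\big(h_R(r),h_R(\tilde r)\big)$, and then the $\eta_h$-Lipschitz property of $h_R$ to bound that in turn by $\eta_g\eta_h\, d(r,\tilde r)=\eta\, d(r,\tilde r)$. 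This yields the claimed $\eta$-Lipschitz constant for the composite.

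Next I would connect this to the iteration \eqref{eq:rbra}. Substituting the first line of \eqref{eq:rbra} into the second (or vice versa) shows that the rankings evolve autonomously as $r^{k+1}=g_R(c^k)=g_R(h_R(r^k))=(g_R\circ h_R)(r^k)$, and symmetrically that the reputations obey $c^{k+1}=(h_R\circ g_R)(c^k)$. Thus one full update of the scheme is exactly one application of the composite map $g_R\circ h_R$ (respectively $h_R\circ g_R$), which by the previous paragraph has Lipschitz constant $\eta=\eta_g\eta_h$. When $\eta<1$ this composite is a Lipschitz map with constant strictly below one, which is by definition a contraction on the Banach space $\mathcal X$; hence \eqref{eq:rbra} is a contraction.

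I do not expect a serious obstacle, since the statement is a routine consequence of how the Lipschitz constant behaves under composition. The only points requiring care are bookkeeping ones: checking that the domains and codomains line up so that $g_R\circ h_R$ is genuinely a self-map of the relevant space, and being explicit that the identical argument applies to $h_R\circ g_R$ so that the conclusion ``the scheme is a contraction'' is unambiguous whether one tracks $r$ or $c$. I would also remark that completeness of $\mathcal X$ is not actually needed for this lemma, which asserts only the contraction property; it will instead be invoked in the subsequent result, where the Banach fixed-point theorem is used to deduce convergence to a unique fixed point.
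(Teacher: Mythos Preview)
Your argument is correct and matches the paper's approach: both verify that the codomain of $h_R$ sits in the domain of $g_R$ and then chain the two Lipschitz estimates to obtain the composite constant $\eta=\eta_g\eta_h$ for the self-map governing the iteration. The only cosmetic difference is that the paper phrases the conclusion as an explicit induction establishing $d(r^n,r^{n-1})\le\eta^{n-1}d(r^1,r^0)$, whereas you stop at the equivalent (and more standard) statement that $g_R\circ h_R$ has Lipschitz constant $\eta<1$; your formulation renders the paper's induction superfluous.
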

	\begin{proof}
		Since the domain of $g$ contains the codomain of $h$ and both are Lipschitz the composition, $g\circ h$, is also Lipschitz.
		Let $d$ be a distance, we prove the induction's basis:
		\begin{equation*}
			\begin{split}
				d(r^2,r^1) &= d\lb g_R(c^{1}),g_R(c^0) \rb \\
						   &= d\lb (g_R\circ h_R)(r^{1}),(g_R\circ h_R) (r^0)\rb \\
						   &\leq \eta d(r^{1},r^0),
			\end{split}
		\end{equation*}
		where $\eta\in[0,1[$ is the Lipschitz constant for $g_R \circ h_R$.
		The induction step then reads
		\begin{equation*}
			\begin{split}
				d(r^n,r^{n-1}) &= d\lb g_R(c^{n-1}),g_R(c^{n-2}) \rb \\
							   &= d\lb (g_R\circ h_R)(r^{n-1}),(g_R\circ h_R) (r^{n-2})\rb \\
							   &\leq \eta d\lb r^{n-1}, r^{n-2} \rb 
							   = \eta d\lb g_R(c^{n-2}),g_R(c^{n-3}) \rb \\
							   &\leq \eta^{n-1}d(r^{1},r^0),
			\end{split}
		\end{equation*}
		and the last inequality holds by the induction hypothesis.
	\end{proof}
	Because we are working in a Banach space the algorithm~\eqref{eq:rbra} converges to a unique value.
	Using the previous lemma, we prove the following result:
	\begin{theorem}
		The class of iterative reputation-based ranking algorithms~\eqref{eq:rbra} converges.
	\end{theorem}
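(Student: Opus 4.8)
The plan is to recognize the scheme~\eqref{eq:rbra} as a fixed-point iteration and appeal to the contraction mapping principle. Composing the two update rules eliminates the reputation variable: since $r^{k+1}=g_R(c^k)=g_R(h_R(r^k))=(g_R\circ h_R)(r^k)$, the sequence $\{r^k\}$ is exactly the orbit of the self-map $T=g_R\circ h_R$ acting on $[0,1]^{|O|}$. This set is a closed subset of the Banach space, hence complete in the induced distance $d$, and by the previous lemma $T$ is $\eta$-Lipschitz with $\eta=\eta_g\eta_h$. Under the hypothesis $\eta<1$ the lemma guarantees that $T$ is a contraction, which is the setting in which I intend to work.

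First I would show that $\{r^k\}$ is Cauchy. The lemma already supplies the geometric decay $d(r^n,r^{n-1})\le \eta^{n-1}d(r^1,r^0)$, so for any $m>n$ the triangle inequality together with a telescoping sum gives
\begin{equation*}
    d(r^m,r^n)\le \sum_{i=n}^{m-1} d(r^{i+1},r^i) \le \left(\sum_{i=n}^{m-1}\eta^{i}\right) d(r^1,r^0) \le \frac{\eta^{n}}{1-\eta}\, d(r^1,r^0).
\end{equation*}
Because $\eta<1$, the right-hand side tends to $0$ as $n\to\infty$, so $\{r^k\}$ is Cauchy; completeness then yields a limit $r^\star\in[0,1]^{|O|}$.

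Next I would transfer this convergence to the reputation sequence. As a Lipschitz map, $h_R$ is continuous, hence $c^k=h_R(r^k)\to h_R(r^\star)=:c^\star$; likewise $g_R$ is continuous, and passing to the limit in~\eqref{eq:rbra} shows that $(r^\star,c^\star)$ satisfies $r^\star=g_R(c^\star)$ and $c^\star=h_R(r^\star)$, so that $r^\star$ is a fixed point of $T$. Uniqueness is immediate: if $r^\star$ and $\tilde r$ were both fixed points, then $d(r^\star,\tilde r)=d(Tr^\star,T\tilde r)\le \eta\, d(r^\star,\tilde r)$ with $\eta<1$ forces $d(r^\star,\tilde r)=0$. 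This establishes convergence to a unique value, as asserted.

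The argument is a routine instance of the contraction mapping principle, so the only genuine point to watch is the standing hypothesis $\eta<1$: without it the map need not contract and the geometric bound degenerates. Everything else, such as verifying that $T$ is a well-defined self-map of $[0,1]^{|O|}$ and that the induced distance is complete on this closed set, is straightforward and is precisely what the preceding lemma and the Banach-space assumption were set up to provide.
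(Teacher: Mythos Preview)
Your proof is correct and follows essentially the same route as the paper: both establish that $\{r^k\}$ is Cauchy via the geometric decay $d(r^n,r^{n-1})\le\eta^{n-1}d(r^1,r^0)$ from the lemma, telescope with the triangle inequality to bound $d(r^m,r^n)$ by $\eta^n d(r^1,r^0)/(1-\eta)$, and invoke completeness of the Banach space. You go a bit further than the paper by also deducing convergence of $\{c^k\}$ from continuity of $h_R$ and by spelling out uniqueness of the fixed point, points the paper only alludes to in the sentence preceding the theorem.
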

	\begin{proof}
		Let $m,n \in \Nn$. 
		For any $\varepsilon>0$, there exists an order, $N$, from which $\eta^N < (1-\eta)\varepsilon/d(r^1,r^0)$.
		Using the triangle inequality we have
		\begin{equation*}
			\begin{split}
				d(r^n,r^m) &\leq \sum_{k=m+1}^n d(r^k, r^{k-1}) \leq \sum_{k=m+1}^n \eta^{k-1} d(r^1,r^0) \\
						   &\leq \eta^m d(r^1,r^0)\sum_{k=0}^{+\infty} \eta^k \leq \eta^N\frac{d(r^1,r^0)}{1-\eta} < \varepsilon,			
			\end{split}
		\end{equation*}
		since $0<\eta<1$, therefore the algorithm~\eqref{eq:rbra} converges.
	\end{proof}	
	\begin{theorem}
		Let $d$ be a normalized distance, $d:X\to [0,1]$.
		Then the algorithm~\eqref{eq:rbra} has an exponential rate of convergence. 
	\end{theorem}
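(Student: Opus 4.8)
The plan is to promote the Cauchy estimate from the preceding convergence theorem into an explicit a priori bound on the distance between the $n$-th iterate $r^n$ and the limit, and then read off the geometric decay. Since we work in a Banach space and the previous theorem already shows that $(r^n)$ is Cauchy, the sequence converges to a unique limit $r^\ast$, which is moreover the fixed point of $g_R\circ h_R$ (this is exactly the point guaranteed by the Banach fixed-point setup, with $\eta=\eta_g\eta_h<1$). The quantity to control is therefore $d(r^n,r^\ast)$, and the claim of exponential rate means this should be bounded by a constant times $\eta^n$.

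First I would recall the per-step contraction inequality established in the lemma, namely $d(r^k,r^{k-1})\leq \eta^{k-1}\,d(r^1,r^0)$. Fixing $n$ and letting $p\geq 1$, the triangle inequality together with the geometric sum gives
\begin{equation*}
	d(r^n,r^{n+p}) \leq \sum_{k=n+1}^{n+p} d(r^k,r^{k-1}) \leq \eta^n d(r^1,r^0)\sum_{j=0}^{p-1}\eta^{j} \leq \frac{\eta^n}{1-\eta}\,d(r^1,r^0),
\end{equation*}
where the bound is uniform in $p$ because $0<\eta<1$ makes the tail of the geometric series summable. This is the same computation used to prove that the sequence is Cauchy, only reorganized so that the small index $n$ is frozen and the free index runs to infinity.

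Next I would pass to the limit $p\to\infty$. Since the metric $d$ induced by the Banach norm is continuous in each argument and $r^{n+p}\to r^\ast$, the left-hand side converges to $d(r^n,r^\ast)$, yielding the a priori bound $d(r^n,r^\ast)\leq \dfrac{\eta^n}{1-\eta}\,d(r^1,r^0)$. Finally, invoking the hypothesis that $d$ is normalized, $d:X\to[0,1]$, we have $d(r^1,r^0)\leq 1$, so that $d(r^n,r^\ast)\leq \dfrac{\eta^n}{1-\eta}$. The error thus decays like $\eta^n$ with $\eta\in[0,1[$ and leading constant $1/(1-\eta)$, which is precisely exponential (geometric) convergence.

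The only genuinely non-routine point is the limit passage from $d(r^n,r^{n+p})$ to $d(r^n,r^\ast)$; it is justified by the continuity of the norm-induced distance, and I would state this explicitly rather than leave it implicit. Everything else follows directly from the lemma's per-step estimate and the closed form of the geometric series, with the normalization hypothesis serving only to absorb the factor $d(r^1,r^0)$ and produce a clean bound depending on $\eta$ alone.
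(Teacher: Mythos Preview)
Your proof is correct but takes a different route from the paper's. The paper argues directly by induction on the distance to the limit: since $r^\ast$ is a fixed point of $g_R\circ h_R$, one has $d(r^\ast,r^{n+1})=d\bigl((g_R\circ h_R)(r^\ast),(g_R\circ h_R)(r^n)\bigr)\leq\eta\,d(r^\ast,r^n)$, and together with the normalization $d(r^\ast,r^0)\leq 1$ this yields $d(r^\ast,r^n)\leq\eta^n$. Your approach instead bounds the Cauchy tail $d(r^n,r^{n+p})$ via the per-step estimate and the geometric series, then passes to the limit in $p$ by continuity of the metric, arriving at $d(r^n,r^\ast)\leq\eta^n/(1-\eta)$. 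The paper's argument is shorter and produces the sharper constant (no $1/(1-\eta)$ factor), but it leans on the fixed-point identity for $r^\ast$, which you mention but do not actually use in your estimate. Your route is the standard Banach fixed-point a priori bound and has the advantage that it never requires writing $r^\ast$ as $(g_R\circ h_R)(r^\ast)$; it works purely from the iterate-to-iterate contraction already proved in the lemma. Both establish the same exponential rate $\eta^n$.
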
	
	\begin{proof}
		The basis of the induction reads:
		\begin{equation*}
			\begin{split}
				d(r^*,r^1) &= d\lb g_R(c^{*}),g_R(c^0) \rb	\\
						   &= d\lb (g_R\circ h_R)(r^{*}),(g_R\circ h_R) (r^{0})\rb \\
						   &\leq \eta d\lb r^{*},r^0 \rb\leq\eta.
			\end{split}
		\end{equation*}
		Assume that the induction hypothesis holds, for $k=n$, then it follows that 
		\begin{equation*}
			\begin{split}
				d(r^*,r^{n+1}) &=  d\lb (g_R\circ h_R)(r^{*}),(g_R\circ h_R) (r^{n}) \rb	\\
							   &\leq \eta d\lb r^{*}, r^{n} \rb \leq \eta^{n+1} d\lb r^{*}, r^{0} \rb %\\
							   \leq \eta^{n+1}.
			\end{split}
		\end{equation*}
	\end{proof}
	To attain an error of, at most, $\varepsilon>0$, we need $\kappa=\log_{\eta}\varepsilon$ iterations, where $\eta$ is the Lipschitz constant of $g_R\circ h_R$.
	% subsection bi_partite_graph_algorithms (end)
	\subsubsection{Similarity Measures} % (fold)
	\label{sub:similarity_measures}
		To group users, according to their preferences/tastes, we need to quantify how similar they are.
		For each pair of users, that have, at least, one rated item in common, we compute a similarity distance, based on item-rating information.
		We specify three different similarity distances: one linear and two non-linear.
		In the following, let $I_{u,v}=I_u \cap I_v$ denote the set of items that both users $u$ and $v$ rated.
		Further, for each user, $u$, we denote by $\tilde u$ the string composed by the concatenation of the pairs $(item, rating)$ of its rated items.
		
		\textbf{Linear  distance.}
			\emph{We define the \emph{linear distance} as: $\mathit{LD}(u,v)=0$ if $I_{u\cap v} = \varnothing$, and otherwise
			\begin{equation*}
				\mathit{LD}(u,v) = \ell(|I_{u,v}|)\left[1- \frac{1}{|I_{u,v}|} \sum_{i\in I_{u,v}}\frac{|R_{ui} - R_{vi}|}{\Delta_\mathcal R}\right], 
			\end{equation*}
			where $R_{jk}$ is the rating that user $j$ gave to item $k$ and the function $\ell:\mathbb Z^+\to [0,1]$ penalizes on how confident we are in the users' similarity.}
			
		We propose two compression-similarity distances based on Kolmogorov complexity, see~\cite{cover2012elements}.
		Given the description of a string, $x$, its \emph{Kolmogorov complexity}, $K(x)$, is the length of the smallest computer program that outputs $x$.
		In other words, $K(x)$ is the length of the smallest compressor for $x$. 
		Let $C$ be a compressor and $C(x)$ denote the length of the output string resulting from the compression of $x$ using $C$.
		Although the Kolmogorov complexity is non-computable, there are efficient and computable approximations by compressors. % such that $|K(x)- C(x)|\leq |x|+\mathcal O(\log|x|)$.
		%Let $u$ and $v$ denote a description of items and the respective ratings given by two different users. 
			
		\textbf{Kolmogorov distance.}
			\emph{We define the \emph{Kolmogorov distance} as: $\mathit{KD}( u, v)=0$ if $I_{u,v} = \varnothing$, and otherwise
			\begin{equation*}
				\mathit{KD}( u, v)=\left(1+|C(\tilde u)-C(\tilde v)|\right)^{-1}.
			\end{equation*}}
			If we think of $\tilde u$ as a training set and the compressor $C$ as a taste/preferences profiler of the users, then we can aggregate users $u$ and $v$ by computing the difference of their preferences $|C(\tilde u)-C(\tilde v)|$. 
			
		\textbf{Compression distance.}
		 	\emph{Based on the \emph{normalized compression distance}, see \cite{li2004similarity}, we define the \emph{compression distance} as: $\mathit{CD}(u, v)=0$ if $I_{u,v} = \varnothing$, and otherwise
		 	\begin{equation*}
		 		\mathit{CD}( u, v) = 1 - \frac{ C(\tilde u\tilde v) - \min\{C(\tilde u),C(\tilde v)\} }{ \max\{C(\tilde u),C(\tilde v)\} },	
		 	\end{equation*}
		 	for the string $\tilde u\tilde v$, the concatenation of $\tilde u$ and $\tilde v$.}
		%For users that share, at least one, rated item note that the $\mathit{LD}$ and $\mathit{CD}$ distances only evaluate the information contained in the common rated items. % that both users rated, 
		%Whilst $\mathit{KD}$ accounts for the information of all rated items that each user rated.
		%This marks a clear distinction between those distances that we explore in the experimental results.
		% paragraph compression_similarity_distance (end)
	% subsection similarity_measures (end)
	\subsubsection{Multipartite graph algorithms} % (fold)
	\label{sub:implicit_user_based_subnetworks}
		Consider a similarity distance, $\mathit{SM}$. 
		We group users in communities using $\mathit{SM}$.
		For a specified affinity level threshold, $\alpha$, we set $S_{u,v}=1$ if $\mathit{SM}(u,v)>\alpha$ and $0$ otherwise, where $S$ is the (possible sparse) adjacency matrix, that characterizes the undirected graph $\Mm \equiv \Mm(S)$.
		A bigger $\alpha$ means that users need to be more strongly related in order to be connected, which translates to a finer granularity in the communities. 
		We compute the subnetworks of $\Mm$, $\Mm_i$ for $i\in\mathcal I$, that are the connected components of $\Mm$. 
		Then, we select the most relevant subnetworks $\{\Mm_j\}_{j\in \mathcal J}$, with $\mathcal J\subseteq \mathcal I$. 
		Subsequently, we apply the multipartite-graph algorithm to each subnetwork $\Mm_j$, for $j\in\mathcal J$ to compute the reputation of users and the ranking of items.
	% subsection implicit_user_based_subnetworks (end)
	
	Let $\mathit{rep\_rank}$ denote a reputation-based ranking algorithm~\eqref{eq:rbra}.
	We summarize the previous steps in the clustering reputation-based ranking Algorithm~\ref{alg:Rankingsubnet}.
	\begin{algorithm} 
		\caption{Clustering reputation-based ranking algorithm.}
		\label{alg:Rankingsubnet}
		\begin{algorithmic}[1]
			\STATE{\textbf{input}: $\alpha$, \emph{dataset}}
			\STATE{\textbf{build} $S$ from dataset and apply threshold $\alpha$}
			\STATE{\textbf{build} $\Mm\equiv\Mm(S)$}
			%\STATE{\textbf{apply} threshold \alpha}
			\STATE{\textbf{find} the connected components of $\Mm$, $\{\Mm_i\}_{i=1}^m$}
			% \FOR{$i$ from $1$ to $m$}
			% 	\STATE{\textbf{compute} $\mathit{rep\_rank}(\Mm_i)$}
			% \ENDFOR
			\STATE{\textbf{output}: weighted average of $\{\mathit{rep\_rank}(\mathcal M_i)\}_{i=1}^m$}
		\end{algorithmic} 
	\end{algorithm}
	%\subsection{Implicit item-based subnetworks} % (fold)
	%\label{sub:implicit_item_based_subnetworks}
	% subsection implicit_item_based_subnetworks (end)
	%\subsection{General multipartite graphs} % (fold)
	%\label{sub:general_multipartite_graphs}
	%	use link between items and links between users to suggest new products with better efficience 
	% subsection general_multipartite_graphs (end)
% section reputation_based_algorithms (end)
\section{Implementation} % (fold)
\label{sec:implementation}
	From now on, we will consider algorithms defined by the equations~\eqref{eq:g} and~\eqref{eq:h} below.
	We compute the ranking of the object, $r_j$, as a weighted average.
	That is, the ratings of user $i$ to item $j$ are weighted by the users' reputation, $c_i$, and therefore $g_R$ in~\eqref{eq:rbra} becomes:
	\begin{equation}\label{eq:g}
		r_j^{k+1}= \sum_{u_i\in I_j}R_{ij}c_i^{k+1}\bigg/\sum_{u_i\in I_j}c_i^{k+1}.
	\end{equation}
	For $h_R$ in~\eqref{eq:rbra}, we tested three different functions parametrized by $f_{\lambda,s}$:
	% \begin{equation}\label{eq:h}
	% 	c_i^{k+1}=1-f_{\lambda,s}(O_i) e_{R,i}(r),
	% \end{equation}
	% where
	% \begin{equation*}
	% 	e_{R,i} = \begin{cases}
	% 		\frac{1}{|O_i|} \displaystyle\sum_{o_j\in O_i}|R_{ij}-r_{j}^{k}|^p \\
	% 		\underset{o_j\in O_i}{\max}|R_{ij}-r_{j}^{k}|^p \\
	% 		\underset{o_j\in O_i}{\min}|R_{ij}-r_{j}^{k}|^p
	% 	\end{cases}.
	% \end{equation*}
	\begin{equation}\label{eq:h}
		c_i^{k+1}=1-f_{\lambda,s}(O_i) \cdot
		\begin{cases}
			\frac{1}{|O_i|} \displaystyle\sum_{o_j\in O_i}|R_{ij}-r_{j}^{k}|^p \\
			\underset{o_j\in O_i}{\max}|R_{ij}-r_{j}^{k}|^p \\
			\underset{o_j\in O_i}{\min}|R_{ij}-r_{j}^{k}|^p
		\end{cases}.
	\end{equation}
	The users' reputation is chosen as a function of the average, maximum or minimum disagreement of individual user's ratings, $R_{ij}$, and the rankings of the rated items, $r_j$.
	In order to control the penalization a user incurs on, for not rating according to the ranking, we define a \emph{decay function} $f_s$. 
	We consider three different decay functions:\\[0.1cm]
		i) $f_{\lambda,s}^1(x) = \lambda$\qquad\qquad ii) $f_{\lambda,s}^2(x) = \lambda \left(1 - e^{-\frac{x}{2}}\right)$\\iii) $f_{\lambda,s}^3(x) = \lambda \left[ 1-(1-\upsilon)(1+e^{s-x})^{-1} \right]$,\\[0.1cm]
	where $\lambda \in[0,1[$, $\upsilon\in ]0,1[$ is the lowest penalization an user can incur
	and $s \in \Nn$ is a parameter based on the number of rated items such that the penalization is decreased by a half.	
	The role of the decay function is to control the penalization a user $i$ suffers if it doesn't rate the item, $R_{ij}$, close to its ranking $r_j$.
	The first, constant, function $f_{\lambda,s}^1$ above is proposed in \cite{li2012robust}, the second is an exponential decrease function, and the third is a logistic function.
	In the second and third cases the penalization increases and decreases, respectively, with the number of rated products.
	In the remaining of the paper we fix for $h_R$ the average and for $f_{\lambda,s}$ the constant function, $f_{\lambda,s}^1$, denoting by \emph{bipartite weighted average} (BWA) the resulting iterative scheme in equations~\eqref{eq:g} and~\eqref{eq:h}.
	\subsubsection{Convergence} % (fold)
	\label{ssub:convergence}
		Here, we prove the convergence of the proposed method.
		In what follows, for a given vector $x\in\mathbb R^n$ and $p\in\mathbb Z^+$, the $p$-norm of $x$ is $\|x\|_p=(\sum_{i=1}^n|x_i|^p)^{\frac{1}{p}}$, and the $\infty$-norm is $\|x\|_{\infty} = \max_{i\in\{1,\ldots,n\}}|x_i|$.
		\begin{lemma}\label{prop:convergence}
			For all $\lambda\in[0,(1+\Delta_R)^{-1}[$, the iterative method in~\eqref{eq:rbra} with functions $g_R$ and $h_R$ defined as in~\eqref{eq:g} and~\eqref{eq:h} converges.
		\end{lemma}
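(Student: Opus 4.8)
The plan is to show that, under the stated bound on $\lambda$, the composite map $g_R \circ h_R$ acting on the space of ranking vectors is a contraction; convergence of~\eqref{eq:rbra} then follows at once from the first Lemma together with Theorem~1. I would work in $\Rr^{|O|}$ (resp. $\Rr^{|U|}$) equipped with the $\infty$-norm, which is the cleanest choice here because the cardinalities $|I_j|$ appearing in $g_R$ will cancel against the size of the corresponding sums. Concretely, I would estimate separately the Lipschitz constants $\eta_h$ of $h_R$ and $\eta_g$ of $g_R$ and then use $\eta = \eta_g\eta_h$ from the first Lemma. Throughout I take the representative case $p=1$ with the constant decay $f_{\lambda,s}^1\equiv\lambda$, i.e. the BWA scheme, which is exactly what produces the threshold in the statement.

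For $h_R$ each coordinate is $c_i = 1-\lambda\,\frac{1}{|O_i|}\sum_{o_j\in O_i}|R_{ij}-r_j|$. Since each map $r_j\mapsto|R_{ij}-r_j|$ is $1$-Lipschitz, averaging and scaling by $\lambda$ gives $\|h_R(r)-h_R(r')\|_\infty\le\lambda\|r-r'\|_\infty$, so $\eta_h=\lambda$. The same formula yields the structural fact I will need below: because the penalty term is an average of numbers in $[0,1]$, every reputation produced by $h_R$ satisfies $1-\lambda\le c_i\le 1$, so the reputations are bounded away from $0$ as soon as $\lambda<1$.

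The main work is the Lipschitz estimate for the weighted average $g_R$, and this is where I expect the only real difficulty. Writing $W_j=\sum_{u_i\in I_j}c_i$, one computes $\partial r_j/\partial c_k=(R_{kj}-r_j)/W_j$ for $k\in I_j$. The numerator is controlled by $|R_{kj}-r_j|\le\Delta_R$, since $r_j$ is a convex combination of the ratings and hence lies within their spread; the denominator is controlled from below by $W_j\ge|I_j|(1-\lambda)$ using the reputation lower bound just established. Applying the mean value theorem along the segment joining $c$ and $c'$ (which stays in the convex box $[1-\lambda,1]^{|U|}$, so both estimates hold uniformly), summing over $k\in I_j$, and passing to the $\infty$-norm, the factor $|I_j|$ cancels and yields $\|g_R(c)-g_R(c')\|_\infty\le\frac{\Delta_R}{1-\lambda}\|c-c'\|_\infty$, i.e. $\eta_g=\Delta_R/(1-\lambda)$. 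The delicate point is precisely the control of the denominator: the sensitivity of a weighted average blows up as the total weight $W_j\to 0$, so the estimate genuinely relies on the uniform lower bound on the reputations, which is what forces the restriction on $\lambda$.

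Combining the two bounds gives $\eta=\eta_g\eta_h=\frac{\lambda\Delta_R}{1-\lambda}$, and $\eta<1$ is equivalent to $\lambda(1+\Delta_R)<1$, that is $\lambda<(1+\Delta_R)^{-1}$, exactly the hypothesis. By the first Lemma, $g_R\circ h_R$ is then a contraction, and Theorem~1 gives convergence. One minor bookkeeping point I would address is that the reputation lower bound holds only on the image of $h_R$; hence the contraction estimate is applied to the composite on ranking-space (equivalently, on the invariant set $[1-\lambda,1]^{|U|}$ of reputations), which affects only the initial step and not the conclusion.
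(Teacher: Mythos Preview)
Your argument is correct and reaches exactly the threshold $\lambda<(1+\Delta_R)^{-1}$; the only caveat you already flag, namely that the Lipschitz bound on $g_R$ holds only on the invariant slab $[1-\lambda,1]^{|U|}$, is handled properly by working with the composite on ranking space.

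Your route differs from the paper's in organisation and in the key estimate for $g_R$. The paper does not invoke Lemma~1 and Theorem~1 explicitly; instead it bounds $\|r^{k+1}-r^k\|_\infty$ directly by a telescoping identity
\[
\frac{R_j\cdot c^{k+1}}{\|c^{k+1}\|_1}-\frac{R_j\cdot c^{k}}{\|c^{k}\|_1}
=\frac{R_j\cdot(c^{k+1}-c^{k})}{\|c^{k+1}\|_1}+R_j\cdot c^{k}\Bigl(\tfrac{1}{\|c^{k+1}\|_1}-\tfrac{1}{\|c^{k}\|_1}\Bigr),
\]
and then combines this with the same $h_R$ bound $\|c^{k+1}-c^k\|_\infty\le\lambda\|r^{k}-r^{k-1}\|_\infty$ that you obtain. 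You replace the telescoping algebra by the derivative computation $\partial r_j/\partial c_k=(R_{kj}-r_j)/W_j$ together with the uniform lower bound $W_j\ge|I_j|(1-\lambda)$, which makes the cancellation of $|I_j|$ transparent and yields the clean constant $\eta_g=\Delta_R/(1-\lambda)$. The payoff of your decomposition is modularity: it plugs directly into the abstract contraction framework, extends immediately to the other decay functions $f^2_{\lambda,s},f^3_{\lambda,s}$ (all bounded by $\lambda$), and isolates exactly where the hypothesis on $\lambda$ enters. The paper's direct-iterate calculation is more hands-on and arrives at an equivalent contraction factor, but the role of the denominator bound is less explicit there.
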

		\begin{proof}
			Between iterations, $r^{k+1}$ and $r^k$, we get
			\begin{equation*}
				\|r_j^{k+1}-r_j^k\|_{\infty} = \left\| \frac{R_{j} \cdot c^{k+1}}{\|c^{k+1}\|_1} -  \frac{R_{j} \cdot c^{k}}{\|c^{k}\|_1} \right\|_{\infty}.
			\end{equation*}
			Without loss of generality, assume that $\|c^{k+1}\|_1 \geq \|c^{k}\|_1$, then the above difference is equal to
			\begin{equation*}
				\begin{split}
				%\|r_j^{k+1}-r_j^k\|_{\infty} &= 
					\left\| \frac{R_{j} \cdot c^{k+1}}{\|c^{k+1}\|_1}  - \frac{R_{j} \cdot c^{k}}{\|c^{k+1}\|_1} + \frac{R_{j} \cdot c^{k}}{\|c^{k+1}\|_1} - \frac{R_{j} \cdot c^{k}}{\|c^{k}\|_1} \right\|_{\infty}\\
					\leq \left\| \frac{R_{j} \cdot c^{k+1}}{\|c^{k+1}\|_1} - \frac{R_{j} \cdot  c^{k}}{\|c^{k+1}\|_1} + \frac{R_{j} \cdot  c^{k}}{\|c^{k}\|_1} - \frac{R_{j} \cdot  c^{k}}{\|c^{k}\|_1} \right\|_{\infty} \\
					\leq \frac{R_\top}{\|c^{k+1}\|_1} \left| c_{\gamma}^{k+1}-c_{\gamma}^k \right|,
				% &= \frac{R^T}{\|c^{k+1}\|_1} \left\| c^{k+1}-c^k \right\|_1
				\end{split}
			\end{equation*}
			where $\left| c_{\gamma}^{k+1}-c_{\gamma}^k \right| = \max_{i\in I_j} \left| c_{i}^{k+1}-c_{i}^k \right|$.
			The iteration step for the reputation, $c$, gives us
			\begin{equation*}
				\begin{split}
					|c_i^{k+1}-c_i^k| %=\left| 1- \frac{f(\lambda,O_i)}{|O_i|} \sum_{i} \left| R_{ji} - r_j^{k+1} \right|^p - 1 + \frac{f(\lambda,O_i)}{|O_i|} \sum_{i} \left| R_{ji} - r_j^{k} \right|^p\right| \\
						&\leq \frac{|f_{\lambda,s}(O_i)|}{|O_i|} \sum_{j} \left|  \left| R_{ji} - r_j^{k} \right|^p - \left| R_{ji} - r_j^{k-1} \right|^p\right| \\
						&\leq \lambda |r_{\beta}^{k}-r_{\beta}^{k-1}|,
				\end{split}
			\end{equation*}
			where $\left| r_{\beta}^{k+1}-r_{\beta}^k \right| = \max_{j\in O_i} \left| r_{j}^{k+1}-r_{j}^k \right|$, and using the triangular inequality, the translation invariance of norms and the fact that $|f_{\lambda,s}(O_i)|\leq 1$.
			Combining the previous inequalities we get
			\begin{equation}\label{eq:pr:conv}
				|r_j^{k+1}-r_j^k| \leq \frac{\lambda}{\|c^{k+1}\|_1} |r_{\beta}^{k}-r_\beta^{k-1}|.
			\end{equation}
			Setting $\lambda<(1+\Delta_R)^{-1}$, since $1-\Delta_\mathcal R\lambda \leq \|c\|_1 \leq 1$, we ensure that~\eqref{eq:pr:conv} is a contraction, therefore~\eqref{eq:rbra} converges.
		\end{proof}
		In our simulations, we consider $\Delta_\mathcal R = 1-0.2$. 
		Therefore if $\lambda\leq \frac{5}{9}$ then our proof ensures that the algorithm converges.
		Notwithstanding, we ensure convergence for any $\lambda\in[0,1[$ modifying the denominator of~\eqref{eq:h} to be $\max\{\|c^{k+1}\|_1,1\}$.
	% subsubsection convergence (end)
	\subsubsection{Computational complexity analysis} % (fold)
	\label{ssub:complexity_analysis}
		The time complexity of Algorithm~\ref{alg:Rankingsubnet} is given by the sum of the complexities of each step.
		Let $\Gg=(V,E)$ denote a graph, where $V$ is a set of vertices and $E$ a set of edges.
		Step 3 consists in building $\Gg$, this is done computing its sparse adjacency matrix, $\mathcal M$, where each rating is used once. 
		Henceforth, the time complexity is $\mathcal O(|C||R|)$, where $|C|=O(1)$ for similarities LD and KD, and where, for the CD, $|C|$ is the worst case complexity of compressing the concatenation of pairs of users. 
		Step 4 can be performed using Tarjan's Algorithm~\cite{hopcroft1973algorithm}, with time complexity in the worst case of $\mathcal O(|V|+|E|)$.
		Step 5 has, in the worst case, the same time complexity of~\cite{li2012robust}, i.e., $\mathcal O(\kappa|R|)$.
		In summary, Algorithm~\ref{alg:Rankingsubnet} has worst case time complexity of $\mathcal O\left((\kappa+|C|)|R|+|V|+|E|\right)$.
		In theory $|E|$ can be, in the worst case $|U|^2$, leading to a time complexity of $\mathcal O\left((\kappa+|C|)|R|+|U|^2\right)$.
		In practice, since (often) the users are sparsely connected in $\Gg$, $|E|=\mathcal O(|U|)$, resulting in a time complexity of $\mathcal O\left((\kappa+|C|)|R|+|U|\right)$.
		In all cases the space complexity of Algorithm~\ref{alg:Rankingsubnet} is $\mathcal O(|R|)$.
	% subsubsection complexity_analysis (end)	
% section implementation (end)
\section{Experimental setup} % (fold)
\label{sec:experimental_setup}
	We run all experiments on macOS 10.12.3 with 2.8 GHz Intel Core i5 and 8 GB 1600 MHz DDR3, in MATLAB~2016a.
	\subsubsection{Datasets} % (fold)
	\label{sub:datasets}
		In this work, we use real world datasets that were obtained from the Stanford Large Network Dataset Collection, \cite{snapnets}.
		% We used the ``Amazon rating movies an TV'' dataset, with $XXX$ users, $YYY$ items and $4,607,047$ ratings.
		% We also used the  ``Automotive'' dataset, with $XXX$ users, $YYY$ items and $1,373,768$ ratings.
		We used the $5$-core version of ``Amazon Instant Video'' dataset that consist of users that rated at least $5$ items, as in~\cite{mcauley2015inferring}.
		It has $5,130$ users, $1,685$ items and $37,126$ ratings, with $R_\bot=1$ and $R_\top=5$.
		These datasets include only tuples of the form (user, item, rating, timestamp) and no metadata or reviews.
		We normalized all ratings, dividing them by $R_{\top}$.	
		% $426,922$ users, $23,965$ items and $583,933$ ratings.
		% The majority of the users ($\approx 70\%$) in this dataset only rated a single item.
		% This may not allow us to fully explore the advantages of the decay functions and the clustering.
		% Thereupon, we used the $5$-core version that consist of users that rated at least $5$ items.
		% This dataset has $xxx$ users, $yyy$ items and $37,126$ ratings.
	% subsection datasets (end)	
	\subsubsection{Benchmarks} % (fold)
	\label{sub:benchmarks}
		We compare our results with the reputation-based ranking system in~\cite{li2012robust}.
		The authors already compared their algorithm with the state-of-the-art algorithms. % several other algorithms.
		Namely, the HITS~\cite{kleinberg1999authoritative}, the Mizz~\cite{mizzaro2003quality}, the YZLM~\cite{yu2006decoding} and the dKVD~\cite{de2010iterative} algorithms.
		Furthermore, the authors of~\cite{li2012robust} showed that their algorithm outperforms all the others, in all standard metrics.
	% subsection benchmarks (end)
	\subsubsection{Evaluation metrics} % (fold)
	\label{sub:evaluation_metrics}
		To quantitatively assess the quality of the ranking systems, we compute the Kendall rank correlation coefficient, a.k.a. \emph{Kendall's tau}, $\tau$, see~\cite{kendall1938new}.
		This statistic measures the ordinal association between two quantities.
		Intuitively, the Kendall correlation between two variables is higher when observations are identical and lower otherwise. 
		Given two sets $X$ and $Y$, let $C$ and $D$ denote the sets of concordant and discordant pairs of elements in $X\times Y$, respectively. 
		The Kendall's tau is defined as $\tau=(|C|-|D|)/(|C|+|D|)$.
		The \emph{effectiveness} is given by the Kendall tau of the rankings' vector, $r$, versus a ground truth, $\hat r$, that is $\tau(r,\hat r)$.
		The selection of the ground truth is an intricate matter, for practical reasons we choose the AA.
		Because of its simplicity, and its popularity among ranking systems, \cite{jurczyk2007discovering,de2010iterative}.
		% Because it is commonly used, for its simplicity, and it is a popular method for ranking systems, \cite{jurczyk2007discovering,de2010iterative}.
		Nonetheless, evaluating the discrepancy between the ranking vector, $r$, and the AA might not be very informative. 
		Since it does not capture the possible multimodal behavior of ratings, therefore might not be very useful to evaluate the quality of a ranking system.
		When using $f_s(x)=\lambda=0.1$, as in~\cite{li2012robust}, we see that it yield high effectiveness values ($\tau \approx 1$).
		The reason is that in this case $c\in[0.9,1]$, and we obtain rankings very close to AA, and in the case $c=1$ it yields the AA.
		This in turn plays against what we want to achieve, a departure from the simplicity of AA, in order to gain insight through an intelligent weighting of ratings.
		Therefore the effectiveness might not be that important to evaluate the quality of a ranking system.
		In such a heterogeneous environment the AA do not carry useful information. 
		So in the bipartite case we opt for the robustness metric.
		In the multipartite case, the effectiveness may be helpful, since we might want to check for homogeneity within the clusters. % classes/clusters/groups/subnetworks.
		Henceforth, we generalize the Kendall tau as
		\begin{equation*}
			\begin{split}
				\bar \tau = \frac{1}{|\Mm|} \sum_{i=1}^N |\Mm_i| \tau_{\Mm_i}, \quad
				\Mm = \bigcup_{i=1}^N \Mm_i, \quad \Mm_i \bigcap \Mm_j = \varnothing, 
			\end{split}
		\end{equation*}
		where $\Mm_i$ is a subgraph of $\Mm$, with $|\Mm_i|$ vertices. % denotes the number of vertices of the graph.
		The effectiveness of a cluster, $\Mm_i$, is denoted by $\tau(r_{\Mm_i},r_{AA|\Mm_i})$. %, measuring its homogeneity.
		
		The \emph{robustness} evaluates the ability of the algorithm to cope against noise or spamming attacks.
		A noisy user gives random ratings to a random set of products, see~\cite{aggarwal2016recommender}.
		Whilst a spamming attacker targets a set of items with the intent of increasing (\emph{Push Attack}) or decreasing (\emph{Nuke Attack}) their rankings.
		In our simulations, we run our algorithm in the original dataset and in spammed ones.
		After, we compute the Kendall tau between the rankings of the original dataset, that we assume as the ground truth, against the rankings obtained with the spammed dataset.
			
		For the multipartite approach we compute the Kendall tau as $\bar \tau = \tau(\bar r,\bar r_{\text{spam}})$, where $\bar r$ is the vector of $\bar r_j$'s given by
		\begin{equation*}
			\bar r_j = \frac{1}{|\hat \Mm|} {\sum_{m} |\hat \Mm_m| r_j^m}, \,\,\text{where }\,\, \hat \Mm = \bigcup_m \hat \Mm_m
		\end{equation*}
		is the union of subnetworks where users rated item $j$.
		%follows:
		% \begin{equation*}
			% \begin{split}
			% 	\Mm = \bigcup_{i} \Mm_i, \quad \Mm_i \bigcap \Mm_j = \varnothing, \quad
			% 	\bar \tau = \frac{1}{|\Mm|} \sum_{i=1}^N |\Mm_i| \tau_{\Mm_i},
			% \end{split}
		% \end{equation*}
		% where $\Mm_i$ are the subgraphs of $\Mm$, and $|\cdot|$ is the number of vertices of the graph.
		The effectiveness within a subgroup, $\Mm_i$, is given by $\tau(r_{\Mm_i},r_{AA})$, and it measures the homogeneity of the it.
		This measure is useful to assess the quality of the partition of the original network, and it can be used to tune the affinity level, $\alpha$, between users so that they are in the same cluster. % in the same community. 
	% subsection evaluation_metrics (end)
% section experimental_setup (end)
\section{Experimental results} % (fold)
\label{sec:experimental_results}
	Here we discuss the obtained results and evaluate our ranking algorithm using the robustness metric.
	\subsubsection*{Robustness} % (fold)
	\label{ssub:robustness}
		\begin{figure}[ht]     
			\centering
		        \includegraphics[width=\textwidth]{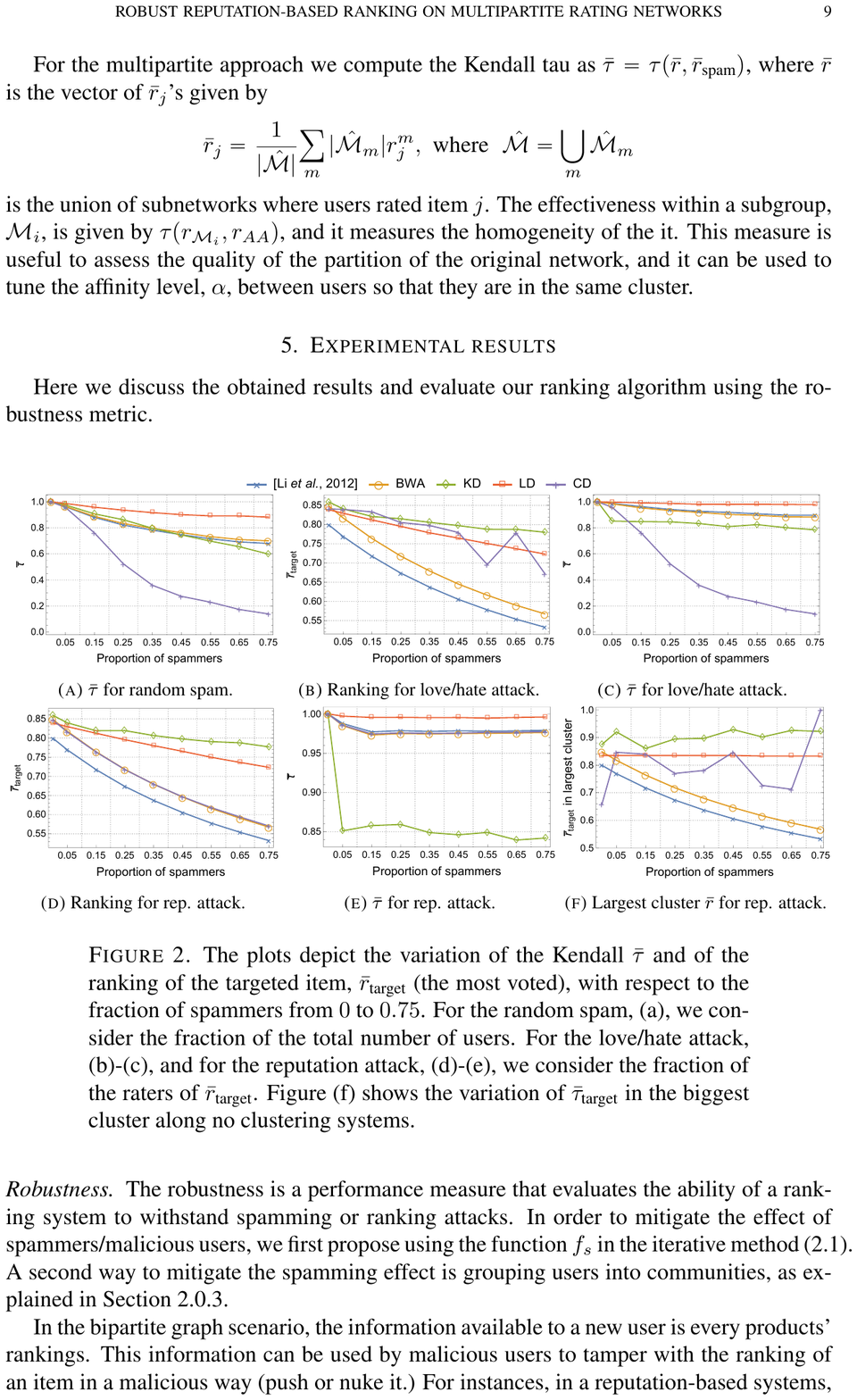}
				\caption{The plots depict the variation of the Kendall $\bar \tau$ and of the ranking of the targeted item, $\bar r_{\text{target}}$ (the most voted), with respect to the fraction of spammers from $0$ to $0.75$.
				For the random spam, (a), we consider the fraction of the total number of users.
				For the love/hate attack, (b)-(c), and for the reputation attack, (d)-(e), we consider the fraction of the raters of $\bar r_{\text{target}}$.
				Figure (f) shows the variation of $\bar \tau_{\text{target}}$ in the biggest cluster along no clustering systems.}
				\label{fig:results}
		\end{figure}
		
		The robustness is a performance measure that evaluates the ability of a ranking system to withstand spamming or ranking attacks.
		In order to mitigate the effect of spammers/malicious users, we first propose using the function $f_s$ in the iterative method~\eqref{eq:rbra}.
		A second way to mitigate the spamming effect is grouping users into communities, as explained in Section~\ref{sub:implicit_user_based_subnetworks}. 
		
		In the bipartite graph scenario, the information available to a new user is every products' rankings.
		This information can be used by malicious users to tamper with the ranking of an item in a malicious way (push or nuke it.)
		For instances, in a reputation-based systems, an attacker can give ratings matching the ranking of items to increase its reputation, before attacking an item.
		
		Whereas when allowing for communities (subnetworks), either the user is already classified into a community and he access the item's ranking within that community, or he is a new user. 
		In this case, the displayed ranking, $\bar r_j$, of the item, $j$, is the weighted average of its ranking within each subnetwork.

		Both of these scenarios mitigate the spamming effect. 
		Since the information made available is not a sufficient statistic, a user cannot fully recover all the information to attack, in a more efficient way, the underlying ranking system.
		
		In this section we discuss the robustness of the algorithm to different kinds of spamming/attacks:
		\begin{itemize}
			\item \emph{Random spamming}: A set of spammers give random ratings, uniformly distributed on $\mathcal R$, to a random number of items, following a Poisson distribution, starting at $1$ with parameter $\lambda_{P} = 5$. % $\lambda_{P} = 1$, for the first dataset **ref** and  for the second dataset **ref**.
				The rated items are randomly selected following the distribution of the number of ratings per item, in the initial dataset.
			\item \emph{Love/hate attack}: A set of spammers targets one item to push/nuke and selects another set of items to nuke/push.
				In our simulations, each attacker nukes the most voted item and pushes another random set of nine filler items. 
			\item \emph{Reputation attack}: In this case, a set of spammers targets one item to push/nuke its ranking. 
				They randomly select another fixed number of items, from the initial dataset, typically the most popular ones, and give them the closest ratings to their rankings. 
				We discuss the nuke version, because the push attack is similar.
			% \item Add spamming users that give random ratings to a random set of items.
			% \item Use a bot to nuke one item's rating 
		\end{itemize}
		In all experiments we set $\lambda=0.3$, $\alpha=0.8$, and for the $\text{LD}$ method the \emph{confidence level function} $\ell(|I_{u,v}|)=\theta^{-1}$ if $|I_{u,v}|\leq \theta$ and $1$ otherwise.
		The parameter $\theta$ sets the number of common rated items of users $u$ and $v$ from which we are confident that they can be similar. We choose $\theta=3$. 
		To evaluate the effect of the random spamming, we compute the Kendall tau, $\tau(\bar r,\bar r_{spam})$.
		In the bipartite case, we see that BWA copes slightly better with noise than the algorithm proposed in~\cite{li2012robust}, which is claimed to be the best procedure with convergence guarantees.
		Using the multipartite graph systems, we notice an increase of robustness for the LD, and for the KD a similar robustness to the bipartite methods.
		The CD performs the worst, because it is sensitive to accommodating new users by rearranging the clusters, which in turn degrades the $\bar \tau$, see Figure~2a.
		
		% To evaluate the performance of the algorithms we use the Kendall tau metric.
		% In the bipartite graph case, we use the non-spammed rankings, $r$, as ground truth $\tau(r,r_{\text{spam}})$, in the multipartite method, we compute $\bar \tau = \tau(\bar r,\bar r_{\text{spam}})$. %\frac{1}{|\Mm|} \sum_i |\Mm_i| \tau(r_i,r_i^{\text{spam}})$.
		% We then compare the robustness of the Robust Reputation-Based Ranking on Bipartite Rating Networks algorithm, against algorithms using multipartite graphs.
		% \begin{figure*}
	% 	\centering
	% 	\begin{subfigure}[b]{0.29\textwidth}
	%  	        \includegraphics[width=\textwidth]{Figures/tau_rob_1cut.pdf}
	%  	        \caption*{Random attack}
	%  	        %\label{fig:obj1}
	%  	\end{subfigure}
	%  	   % \quad % add desired spacing between images, e. g. ~, \quad, \qquad, \hfill etc.
	%  	\begin{subfigure}[b]{0.29\textwidth}
	%  	        \includegraphics[width=\textwidth]{Figures/tau_rob_2cut.pdf}
	%  	        \caption*{Love/hate attack}
	%  	        %\label{fig:obj2}
	%  	\end{subfigure}
	%  	% \quad % add desired spacing between images, e. g. ~, \quad, \qquad, \hfill etc.
	%  	\begin{subfigure}[b]{0.38\textwidth}
	%  	        \includegraphics[width=\textwidth]{Figures/tau_rob_3.pdf}
	%  	        \caption*{Informed attack}
	%  	        %\label{fig:obj3}
	%  	\end{subfigure}
	%  	\caption{Effect of the $\lambda$ parameter on the effectiveness metric.
	% 			 From left to right we used as decay functions $f1$, $f2$, $f3$.}
	% 	\end{figure*}

		To test the robustness (to attacks) of the methods we simulate attacks to the most voted item, $r_{\text{target}}$.
		We ranged the proportion of spammers from $0$ to $0.75$ of the fraction of total voters on the target item.
		For the love/hate attack, Figures~2b and~2c, we see that the variation of the rating $r_{\text{target}}$ was smaller, henceforth the attack is less effective, when using methods KD and LD. 
		The smallest variation of $\tau$ occurs with method LD, guaranteeing not only a strong robustness to the attack on $r_{\text{target}}$, but also to possible side effects.
		On the other hand, KD was very effective to deter the rank attack, but failed to prevent noticeable changes on other items' rankings.
		This is a consequence of the reorganization of the subnetworks to minimize the effect of the attack on $r_{\text{target}}$, and our generalization of the Kendall tau does not account for this repercussion.
		Moreover, in the larger clusters containing users who rated the targeted item, the ranking of the item was kept unchanged for both LD and CD, the same effect as for the reputation attack in Figure~2f. This indicates that the attackers are not grouped with normal users and thus do not affect the rankings of items in the cluster. For KD, although the ranking oscillates due to the reorganization of clusters, it is not nuked as in~\cite{li2012robust} and BWA. 
%		 ***
%		 Comment the variation of the $r$ within communities  
%		 ***
		
%		*****
		
		Utilizing communities, the effect of the reputation attack on the ranking of the targeted item was attenuated, see Figures~2d and~2e.
		Because the intelligent attacker chooses the closest rating to the ranking of the filler items it should not affect drastically the rankings of the filler items, and the attackers increase their reputation. 
		The ranking of the nuked item, using multipartite rating networks, drops less than when not using them. 
		%** Insert $r_{com}(posmax)$ vs $r(posmox)$ Graph **
		The organization of communities changes with the increasing number of spammers, this is an effect of the system to cope with the attack.
		Thus, it produces a bigger change in $\bar\tau$, because it reduces drastically the effect of the targeted attack, and since the ranking of the filler objects do not change drastically (the attackers rate those items with their weighted average ranking) it is not a very important side effect. 
		Again, in the larger clusters containing users who rated the targeted item,
		the ranking of the item was kept unchanged when using LD, had a small variation for KD, and had a big variation for CD. Both last two variations reflect the opposite effect on the ranking of the targeted item as what is intended by the attacker, see Figure~2f. 
		The clustering produced by KD and CD aggregate attackers with legit users (that gave smaller ratings to the target item) on a separated cluster, leaving raters who gave high on the biggest cluster.    
		Further, observe that for new users, the displayed rankings are a weighted average, whereas in each cluster they are the average within the cluster.       
%		*****
		%** To evaluate the robustness we expect the spammers/attackers to be pushed into a community of their own, or not to impact too much the rankings along groups. see the experiments first than write **
		% \begin{figure}
		% 	\centering
		% 	 		\includegraphics[width=0.48\textwidth]{Figures/rank_atk_r.pdf}
		% 	 	 	\caption{Effect of the Reputation attack on the ranking of the most voted item, $r_{622}$, with respect to the proportion of spammers ($0-0.55$) on the initial number of item's raters.}
		% 	 		\label{graph:rank_atk}
		% \end{figure}
	% subsubsection robustness (end)
% section experimental_results (end)
\section{Conclusions} % (fold)
\label{sec:conclusions}
	%Compare the robustness of our algorithm to the bipartite one.
	We develop a new multipartite ranking system that allows the coexistence of multiple  preferences by enabling different rankings for the same item for different users.
	This is achieved by automatically clustering similar users, based on their given ratings.
	For each cluster we use a bipartite reputation-based ranking system, for which we prove convergence and efficiency in a more general setting than previous results. 
	Although it favors the creation of bubbles, \emph{i.e.}, segregates users into groups, we show that our method makes the ranking system more robust to attacks and spamming.
%	To group users into communities, we used a novel approach based on the Kolmogorov complexity, the KD, that revealed very robust to attacks.
	
	As future work, we will investigate the effect of bribing users in order to influence the ranking of items, as in~\cite{grandi2016network}. 
	 And optimize and compute the cost of attacking the ranking of items for systems where an user can only rate a bought item.   
	Also, in order to reduce the rate of change in the communities, we will explore the use of steadiness functions, based on a timestamp, so that established clusters do not change so easily.
	
	We can explore a good choice of decay function in order to penalize more the reputation of users that are not similar to majority of users.	
	
	Another possible extension of the proposed algorithm is to use it for recommendation systems.
\bibliographystyle{apalike}
\bibliography{library.bib}

\end{document}